\newtheorem{lem}{Lemma}
\newtheorem{thm}{Theorem}
\newtheorem{defn}{Definition}
\newtheorem{rem}{Remark}
\newtheorem{prob}{Problem Formulation}
\def\mb{\mathbf}
\def\mc{\mathcal}
\begin{document}
	
	\title{On the Complexity of Minimum-Cost Networked Estimation of Self-Damped Dynamical Systems}
	
	\author{Mohammadreza~Doostmohammadian,~\IEEEmembership{Member,~IEEE,}
		and~Usman~Khan,~\IEEEmembership{Senior~Member,~IEEE}
		\IEEEcompsocitemizethanks{\IEEEcompsocthanksitem M.~Doostmohammadian is with Department of Mechanical Engineering, Semnan University, Semnan, Iran.\protect\\
			E-mail: doost@semnan.ac.ir
			\IEEEcompsocthanksitem U. A. Khan is with Electrical and Computer Engineering Department, Tufts University, Medford, MA. His work has been partially supported by NSF under grants CCF-1350264, CMMI-1903972, and CBET-1935555.\protect\\
			E-mail: khan@ece.tufts.edu
		}
		\thanks{Manuscript received Aug. 4, 2019.}}

	\markboth{IEEE Transaction on Network Science and Engineering}%
{Doostmohammadian \MakeLowercase{\textit{et al.}}: On the Complexity of Minimum-Cost Networked Estimation of Self-Damped Dynamical Systems}
	
	\IEEEtitleabstractindextext{%
		\begin{abstract}
In this paper, we consider the optimal design of networked estimators to minimize the communication/measurement cost under the networked observability constraint. This problem is known as the minimum-cost networked estimation problem, which is generally claimed to be  NP-hard. The main contribution of this work is to provide a polynomial-order solution for this problem under the constraint that the underlying dynamical system is self-damped. Using structural analysis, we subdivide the main problem into two NP-hard subproblems known as (i) optimal sensor selection, and (ii) minimum-cost communication network. For self-damped dynamical systems, we provide a polynomial-order solution for subproblem (i). Further, we show that the subproblem (ii) is of polynomial-order complexity if the links in the communication network are bidirectional. We provide an illustrative example to explain the methodologies.
		\end{abstract}

		\begin{IEEEkeywords}
			Networked estimation, observability, Self-damped dynamical system,  Computational complexity,  Structural analysis
		\end{IEEEkeywords}}

		\maketitle
		\IEEEdisplaynontitleabstractindextext
		\IEEEpeerreviewmaketitle

\IEEEraisesectionheading{\section{Introduction}\label{sec_intro}}
		
\IEEEPARstart{N}{etworked} estimation has been a topic of significant interest in the literature  \cite{jstsp,Sayed-LMS,acc13,jstsp14,nuno-suff.ness,park2017design}, where a group of agents\footnote{In this paper, agent/sensor/estimator is used interchangeably.} is assigned to take measurements and share information over a communication network in order to estimate the state-vector of a dynamical system. This paper studies the complexity of Minimum-Cost Networked Estimation (MCNE) for \textit{self-damped dynamical systems}, crucial in many large-scale applications. Because of the large size, only solutions with \textit{polynomial-order complexity} are desirable. Self-damped dynamical system is a type of system in which the state of each node in the system is influenced, among others, by itself  \cite{acc13_mesbahi,trefois2015zero,chapman2015semi}.  Structurally representing this system by a digraph, each state node contains a self-cycle. Such systems are prevalent, for example, in epidemic models  \cite{nowzari2016analysis} and eco-systems  \cite{may2001book}. Also, in discretization of continuous-time systems the discretized system matrix always has non-zero diagonal entries, implying that its associated system digraph contains a self-link on every state node. Such discretized models may be derived via Euler's method or Tustin's method as discussed in \cite{kailath1980linear}. See more examples of discrete-time  representations for modeling the target-tracking systems in \cite{bar2004estimation,li2003survey,wang2011target,ennasr2018distributed,el1992generalized}. The self-damped assumption is also considered in estimation scenarios as in \cite{pequito2017structurally}.

For self-damped systems, we relax the MCNE problem into two subproblems: (i) Minimum-Cost Sensor Selection (MCSS) problem, and (ii) Minimum-Cost Communication Network (MCCN) problem. We separately discuss the computational complexity of each problem. The MCSS problem is to find the optimal sensor placement to minimize the cost of measurements. This cost may represent sensor expenses, or utility/energy consumption by sensors  \cite{rowaihy2007survey}. On the other hand, the MCCN problem is to optimally design the communication network to minimize the communication cost at the agents, where the cost may represent communication reliability  \cite{jan1993topological}, communication energy/power  \cite{wieselthier2000construction}, or distance (also referred to as capacity-infrastructure cost)  \cite{baccelli1999poisson}, among others.

\textit{Related literature:} Optimal sensor selection  \cite{pequito_gsip,jiang2003optimal} and dual problem of optimal actuator placement  \cite{tzoumas2015minimal,olshevsky2014minimal} is shown to be NP-hard\footnote{NP-hardness (Non-deterministic Polynomial-time hardness) is the defining property of a class of problems that has no solution in the time-complexity upperbounded by a polynomial function of the input parameters.} in the literature. The problem of optimal selection of sensors (information gatherers) is shown to be reducible to a \textit{minimum set covering} problem \cite{pequito_gsip}. The problem of optimal input selection is shown to be reducible to \textit{r-hitting set problem} in  \cite{tzoumas2015minimal}. These references imply that the MCSS problem is NP-hard, in general. On the other hand, cost-optimal communication network design is considered in  \cite{krause2006near,jan1993topological,wieselthier2000construction,baccelli1999poisson,pequito2017structurally,pequito2014optimal,pequito2014design}. In  \cite{krause2006near}, tradeoffs between optimal sensor placement and minimization of communication cost is claimed to be NP-hard and therefore a near-optimal solution is proposed. The near-optimal approximation\footnote{For NP-hard problems, typically a $\rho$-approximation algorithm is provided with provable guarantees on the factor $\rho$ of the returned solution to the optimal one.} solution in \cite{krause2006near} is of complexity $\mc{O}(n\log(n))$. In  \cite{baccelli1999poisson}, communication to a central unit based on Poisson-Voronoi spanning tree with application to tracking in mobile communication systems is discussed. In the literature, a few references consider the optimal communication network design under observability constraints  \cite{pequito2017structurally,pequito2014optimal,pequito2014design}; in these works, the main objective is to design the network such that the communication cost to a central base is minimized while satisfying observability constraint as a necessary condition for centralized estimation\footnote{Note that in works \cite{pequito2017structurally,pequito2014optimal,pequito2014design} although the authors claim a distributed framework, they indeed consider the estimation via a central node (or the \textit{root node}) in the sensor network.}. For example, the complexity of the cost-optimal design of the communication network for centralized estimation in  \cite{pequito2017structurally} is proved to be in $\mc{O}(n^5)$. However, the general MCCN problem is known to be NP-hard  \cite{chen1989strongly}.

\textit{Contributions:} The main contributions of this paper are as follows, First, in Section~\ref{sec_struc}, we reformulate the MCNE problem for self-damped systems using structured systems theory and decompose the problem into MCSS and MCCN subproblems. In this direction, we generalize the optimal centralized estimation problem in  \cite{pequito2017structurally,pequito2014optimal,pequito2014design} to the networked case, where the problem is constrained with \textit{networked observability}. Second, in Section~\ref{sec_mcss},  we prove existence of a polynomial-order solution for MCSS problem in the case the system is self-damped. We reformulate this problem as a linear assignment problem with a solution of complexity $\mc{O}(N^3)$ based on the Hungarian method. It should be noted that, as claimed in  \cite{pequito_gsip}, for general systems the  MCSS problem is NP-hard. Third, in Section~\ref{sec_mccn}, we show that generally NP-hard MCCN problem has polynomial-order solutions under bidirectional link constraint, i.e., when the communication adjacency matrix is symmetric. Note that the main contribution of this work is not to generate an algorithm as the solution to the MCNE problem but determining the complexity of the solution.  
{It should be emphasized that in this paper, we determine the minimum number of agents such that each agent measures one necessary state for observability of the underlying self-damped dynamical system. In other words, the case of \textit{minimal system observability} is considered here. The case where more measurements of the system are given, for example, to improve the estimation performance or to reduce the cost of network communication, is left as future research direction.}

\section{MCNE Problem Statement} \label{sec_prob}
We consider discrete-time LTI dynamics in the form:
\begin{eqnarray}\label{eq_sys1}
\mb{x}(k+1) = A\mb{x}(k) + \mb{v}(k) \\
\mb{y}(k) = C\mb{x}(k) + \mb{r}(k)
\end{eqnarray}
where $k$ is the time-index, $\mb{x}\in\mathbb{R}^n$ represents the state of the dynamical system, $\mb{y}\in\mathbb{R}^N$ is system measurement, $\mb{v}$ and $\mb{r}$ respectively represent system and measurement noise\footnote{{The noise is inherent to any estimation scenario. Although, in this paper we do not directly use the noise in our analysis, we consider the noise terms as in most general case the concept of estimation and observability are tightly related with noise. It should be noted that the cost-optimal design in this paper is irrespective of the noise terms. However, after designing the sensor-network, the distributed estimation scenario aims to track the noise-corrupted system states.}}, and $A$ and $C$ respectively represent system and measurement matrices. {The discrete-time model \eqref{eq_sys1} may be derived from the discretization of continuous-time models in the form,
\begin{eqnarray}\label{eq_sys_cont}
\dot{\mb{x}} = \bar{A}\mb{x}
\end{eqnarray}
Applying the \textit{Euler's method} \cite{kailath1980linear} for discretization of \eqref{eq_sys_cont} with sampling time $T$,	

\begin{eqnarray}\label{eq_sys_disc_euler}
\mb{x}((k+1)T) \simeq (I+T\bar{A})\mb{x}(kT)
\end{eqnarray}
where $I$ is the identity matrix. In fact, the Euler's method is the approximation to the following discrete-time model of continuous-time system \eqref{eq_sys_cont}:
\begin{eqnarray}\label{eq_sys_disc}
\mb{x}((k+1)T) = \exp(T\bar{A})\mb{x}(kT)
\end{eqnarray}
Another approximation to the above model is by using the \textit{Tustin's method} \cite{kailath1980linear} for discretization of \eqref{eq_sys_cont} as follows:
\begin{eqnarray}\label{eq_sys_disc_tustin}
\mb{x}((k+1)T) \simeq (I-\tfrac{T}{2}\bar{A})^{-1}(I+\tfrac{T}{2}\bar{A})\mb{x}(kT)
\end{eqnarray}

In both discretized models \eqref{eq_sys_disc_euler} and \eqref{eq_sys_disc_tustin} the diagonal entries of the discrete-time system matrices are non-zero (due to the identity matrix $I$).}  It should be mentioned both Euler's and Tustin's methods are discrete-time approximations of continuous-time system model, and both give  approximate solutions to the continuous-time model \eqref{eq_sys_cont}.

In general, system estimation necessitates the pair $(A,C)$ to be observable\footnote{{Throughout this paper, we use structured system theory where the observability analysis is structural and generic. For notation simplicity, observability implies structural or generic observability.}}. In networked estimation, the group of agents/estimators are connected such that the system is observable to every agent/estimator via the local measurement matrices~$C_i$ at each agent~$i$ with~$C=[C_1^\top\ldots C_N^\top]^\top$. Since the pair $(A,C_i)$ is not necessarily observable at any agent, the agents recover the observability deficiency by sharing measurements or state predictions over a communication network $\mc{G}_U$. To keep the exposition simple and without loss of generality, we assume each measurement $\mb{y}_i$ is taken by one sensor/agent $i$. At every time-step $k$, every agent $i$ shares its information with other agents in its neighborhood $\mc{N}_i$. By sharing necessary information over $\mc{G}_U$ every agent is capable of tracking the global state of the dynamical system. In this regard, the necessary condition for networked estimation is \textit{networked observability} defined as follows~\cite{acc13,jstsp14,nuno-suff.ness,park2017design,globalsip14}:
\begin{defn}\label{def_netobs}
	For the dynamical system $(A,\{C\}_i)$ monitored by a network $\mc{G}_U$ of agents with adjacency matrix $U$, the networked system is observable by each agent if the pair $(U \otimes A, D_C)$ is observable, where $D_C$ is defined as:
	\begin{eqnarray}
	D_C \triangleq \left(
	\begin{array}{cccc}
	\sum_{j\in \mathcal{N}_1}C_j^\top C_j\\
	&\ddots\\
	& &\sum_{j\in \mathcal{N}_N}C_j^\top C_j\
	\end{array}
	\right)
	\end{eqnarray}
	and the set $\{\mc{N}_1,\hdots,\mc{N}_N\}$ is the set of neighboring agents, where $\mc{N}_j$ is the set of agent $j$'s neighbors.
\end{defn}
It should be noted that the $(A,C)$-observability is necessary for $(U \otimes A, D_C)$-observability, but is not sufficient; the next section explains more conditions for networked observability. If the networked observability is satisfied, a feedback gain matrix may exist such that every agent achieves asymptotic omniscience on the dynamical system state  \cite{acc13,jstsp14,nuno-suff.ness,park2017design,globalsip14}. This simply implies that the error dynamics at every agent is bounded and achieves asymptotic mean-squared stability. The idea in this paper is to design the measurement matrix and communication network such that certain cost is minimized. The cost of networked estimation is twofold: the measurement cost and the communication cost. The problem is to minimize these costs while networked observability constraint is satisfied, termed as the Minimum-Cost Networked Estimation (MCNE) problem.
\begin{prob} \label{prob_1}
	Consider the system matrix and measurement matrix pair $(A,C)$. The measurement of state $j$ by agent $i$ is assigned with a cost $\delta_{ij}$ and the communication from agent $i$ to agent $j$ is assigned with a cost $\eta_{ij}$. Then, the problem is to solve the following:
	\begin{equation} \label{eq_formulate_1}
	\begin{aligned}
	\displaystyle
	\min
	~~ & \sum_{i=1}^{N} \sum_{j=1}^{n} \delta_{ij}\mc{C}_{ij} +\sum_{i=1}^{N} \sum_{j=1}^{n} \eta_{ij}\mc{U}_{ij} \\
	\text{s.t.} ~~ & (U\otimes A,D_C)\mbox{-observability},\\
	~~ &  \mc{C}_{ij} \in \{0,1\}\\
	~~ &  \mc{U}_{ij} \in \{0,1\}\\
	\end{aligned}
	\end{equation}
	where the matrices $\mc{C} \sim \{0,1\}^{N \times n}$ and $\mc{U} \sim \{0,1\}^{N \times N}$ represent the $0-1$ structure of $C$ and $U$ matrices, respectively.	
\end{prob}

\begin{rem} \label{mcne}
	The general MCNE problem is NP-hard to solve.
\end{rem}
We prove this remark in the next sections as we reformulate the problem using structural analysis.
In this paper, we solve the MCNE problem for self-damped systems.

\begin{defn}\label{defn_self}
	Self-damped systems are where the evolution of every state $\mb{x}_i$ is a function of, among other states, the state $\mb{x}_i$ itself. In structured systems theory, a self-damped system is modeled by a graph with a self-loop on every state node \cite{acc13,jstsp14,nuno-suff.ness,park2017design,globalsip14} \cite{acc13_mesbahi,trefois2015zero}.
\end{defn}
{The self-damped system dynamics is prevalent in discretized representations of continuous-time systems, as in the mentioned Euler's and Tustin's discretized models.} \textit{The main problem addressed in this paper is on the complexity of MCNE problem under self-damped system constraint. We investigate if there is a polynomial-order solution for this problem; and if not, is there an efficient (with polynomial-order complexity) approximate algorithm to solve the problem, and what is the $\rho$-approximation of the solution?} To answer these questions, in the next sections,  using structured systems theory we reformulate the problem for self-damped systems into two subproblems and find the complexity of the solution for each subproblem.

{\textit{Assumptions:} The following assumptions are made throughout the paper:
\begin{enumerate}[(i)]
\item The underlying system to be estimated is self-damped.
\item The system matrix $A$ is not necessarily \textit{irreducible}\footnote{A reducible matrix $A$ is such that, by simultaneous row/column permutations, it can be transformed into block upper/lower-triangular form. Otherwise, it is irreducible \cite{rein_book}.}, see the comments after Lemma \ref{lem_nec}.
\item Minimum number of measurements for $(A,C)$-observability are available.
\item The communication links in the  network of agents/sensors are bidirectional.
\item Each agent measures one system  state. 
\end{enumerate}
Regarding Assumption (v), it should be mentioned that the methodology can be extended to the case where agents take more than one measurements, discussed in Sections~\ref{sec_mccn} and~\ref{sec_con}.

\section{Reformulation Based on Structured Systems Theory} \label{sec_struc}
In this paper, a structural (also referred to as \textit{generic}) approach is adopted to solve the MCNE problem. It is known that many properties of the system emerge from the system structure and are irrespective of the numerical values of system variables  \cite{woude:03}. Among these properties are system controllability and observability  \cite{liu-pnas,globalsip14,asilomar11}. {It is known that  the values of system parameters for which a generic property does not hold lies  on an algebraic subspace with \textit{zero Lebesgue measure} \cite{woude:03}. This implies that the structural observability results in the observability for \textit{almost all} values of system/measurement parameters. Determination of the zero-measure algebraic subspace for which structural observability does not imply observability is case-specific and generally can be formulated based on the dependencies of system parameters and is out of scope of this paper.}

In the rest of the paper, consider matrix $\mc{A} \sim \{0,1\}^{n \times n}$ as the $0-1$ structure of system matrix $A$ and $\mc{C} \sim \{0,1\}^{N \times n}$ as the $0-1$ structure of measurement matrix $C$. Based on the structured systems theory, $\mc{A}$ and $\mc{C}$ can be represented by a graph known as \textit{system digraph}. The $(A,C)$-observability generically emerges from this graphical model. For self-damped systems, the system digraph contains a self-loop on every state node. Mathematically, this implies that $\mc{A}_{ii}=1$ for all $i \in \{1, \hdots,n\}$, and therefore, it is known that self-damped systems are structurally full-rank  \cite{asilomar11}. For such systems, the observability can be analyzed via certain components in the system digraph as we discuss below.
\begin{defn}
	In a system digraph, a component in which every state has a path to every other state in the same component is called a \textit{Strongly Connected Component (SCC)}. A SCC with no outgoing links to other SCCs is called as \textit{parent SCC}, denoted by $\mc{S}^p$, and a non-parent SCC is called \textit{child SCC}. The partial order of SCCs is denoted by $\prec$, i.e., $\mc{S}_i \prec \mc{S}_j$ implies that there is a directed path from $\mc{S}_i$ to $\mc{S}_j$. {Define the set $\mc{S}^p=\{\mc{S}^p_1,\mc{S}^p_2,\dots\}$ as the set of all parent SCCs.}
\end{defn}
\begin{rem} \label{scc}
	The algorithm for decomposing a system digraph into SCCs and determining their partial order (parent-child classification) is called \textit{depth-first-search} algorithm with computational complexity $\mc{O}(n^2)$  \cite{cormen2009introduction}, with $n$ as the number of graph nodes.
\end{rem}

The following theorem relates the observability of self-damped systems with SCCs in their system digraph, in a generic sense.

\begin{thm} \label{thm_Sp}
	A self-damped system digraph is observable {if and only if} for each parent SCC, $\mc{S}^p_j$, there is one state node $x_i \in \mc{S}^p_j$ measured by an agent.
\end{thm}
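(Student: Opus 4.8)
\emph{Proof strategy.} The plan is to reduce the theorem to a pure reachability statement on the condensation of the system digraph, in three steps: first, invoke the two generic conditions for structural observability; second, check that the ``rank''-type condition is automatic for self-damped systems; third, translate the surviving reachability condition into the parent-SCC criterion. Throughout, ``observable'' means generic/structural observability, as fixed in the excerpt.

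For the first step I would use the structural observability characterization \cite{liu-pnas,woude:03,rein_book} (the observability counterpart of the classical structural controllability conditions): $(\mathcal{A},\mathcal{C})$ is generically observable if and only if \emph{(a)} every state node of the system digraph has a directed path to some measured node (output-connectivity), and \emph{(b)} the digraph admits a disjoint union of state cycles together with paths ending at measured nodes that covers all $n$ state nodes (the no-dilation / S-rank condition). The second step is then immediate: for a self-damped system $\mathcal{A}_{ii}=1$ for every $i$, so the $n$ self-loops already form a disjoint cycle family covering every state node (equivalently, $\mathcal{A}$ has full structural rank \cite{asilomar11}); hence \emph{(b)} holds unconditionally, and structural observability becomes equivalent to \emph{(a)} alone. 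The theorem thus reduces to showing that \emph{(a)} holds if and only if every parent SCC contains a measured state node.

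The third step is an elementary reachability argument on the acyclic condensation. Decompose the digraph into SCCs and contract each SCC to a node (Remark~\ref{scc}); since a parent SCC by definition has no outgoing edge to another SCC, the parent SCCs $\mathcal{S}^p_1,\mathcal{S}^p_2,\dots$ are exactly the sinks of the condensation. For the ``if'' direction, assume every parent SCC contains a measured node; given any state node $x$ in some SCC $\mathcal{S}$, acyclicity of the condensation yields a directed path from $\mathcal{S}$ to some sink $\mathcal{S}^p_j$, and splicing a walk inside $\mathcal{S}$, this inter-SCC path, and a walk inside the strongly connected $\mathcal{S}^p_j$ to its measured node produces a directed path from $x$ to a measured node, so \emph{(a)} holds. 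For the ``only if'' direction, suppose some parent SCC $\mathcal{S}^p_j$ has no measured node and pick $x\in\mathcal{S}^p_j$; any directed edge leaving a vertex of $\mathcal{S}^p_j$ returns to $\mathcal{S}^p_j$ (otherwise there would be an edge from $\mathcal{S}^p_j$ to another SCC, contradicting that it is a parent), so every directed walk from $x$ stays inside $\mathcal{S}^p_j$, which contains no measured node; hence $x$ is not output-connected and \emph{(a)} fails. Combining the two directions gives the stated equivalence and also shows that a single measured node per parent SCC is both necessary and sufficient.

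The step I expect to be the main obstacle is the second one: arguing cleanly that the no-dilation / full-structural-rank half of the generic observability test is entirely absorbed by the self-loops, so that only output-connectivity remains --- everything downstream of that is routine graph reachability. A secondary point worth stating carefully is that $\mathcal{A}$ need not be irreducible, so there may be several parent SCCs, which is precisely why the condition is imposed on each $\mathcal{S}^p_j$ separately rather than on one distinguished ``root'' node (consistent with Assumption~(ii)).
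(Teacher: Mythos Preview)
Your proposal is correct and follows essentially the same route as the paper: invoke the two generic observability conditions, discharge the rank/no-dilation condition via the self-loops, and reduce the remaining output-connectivity condition to the parent-SCC criterion by a reachability argument on the SCC decomposition, with necessity handled by the observation that a parent SCC is a sink so walks cannot escape it. The only cosmetic difference is that you phrase the reachability step in terms of the acyclic condensation and its sinks, whereas the paper argues directly with the partial order $\mathcal{S}^c_i \prec \mathcal{S}^p_j$; the content is identical.
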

\begin{proof}
The proof follows the main theorem on structural observability developed in  \cite{lin,rein_book}. Based on this theorem, two necessary and sufficient conditions on the system digraph for structural observability are as follows: (i) there is a directed path form every state node to an output (or measurement), and (ii) there is a family of disjoint cycles spanning all nodes in the system digraph\footnote{The condition (i) for structural observability is known as \textit{output connectivity condition}, and condition (ii) is known as \textit{rank condition}.}.

{\textit{Sufficiency}:} Based on the definition of the self-damped systems the condition (ii) is already satisfied. Note that based on the definition, there is a path from every child SCC to (at least) one parent SCC. Having an output $\mb{y}_i$ from one state in every parent SCC, $\mc{S}^p_j$, implies the output-connectivity of all state nodes in the same parent SCC, and further, all states in the child SCCs connected to $\mc{S}^p_j$ via a directed path, i.e., every $\mc{S}_i^c$ for which $\mc{S}^c_i \prec \mc{S}^p_j$ is also connected to the output $\mb{y}_i$. This holds for every parent and child SCC. This satisfies the condition (i) for structural observability and the theorem follows.

{\textit{Necessity}: We prove the necessity by contradiction. Consider the case where (at least) one parent SCC, say $\mc{S}^p_i$, has no outgoing measurement. Therefore, the node states in $\mc{S}^p_i$ are not connected to any output. This is because (i) no agents measure a state node in $\mc{S}^p_i$, and (ii) based on the definition of parent SCC, there is no path from states in $\mc{S}^p_i$ to any other output-connected SCC. This implies that the output-connectivity is not satisfied, and therefore the system is not observable.
	}
\end{proof}

Next, we extend the observability results to networked estimation acquired by a network of agents. In this sense, the network must be specifically designed to ensure networked observability as follows.

\begin{thm} \label{thm_net}
	Let a self-damped system have all the measurements  for structural observability at the agents (Theorem~\ref{thm_Sp}). For the networked estimation protocol to achieve asymptotic omniscience on system state (networked observability according to Definition~\ref{def_netobs}), the network $\mc{G}_U$ {is sufficient to} be Strongly Connected (SC).
\end{thm}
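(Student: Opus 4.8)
The plan is to read off, on the composite system digraph of the pair $(U\otimes A,D_C)$, the two generic observability conditions already used in the proof of Theorem~\ref{thm_Sp} — output connectivity and the spanning disjoint‑cycle (rank) condition — and to show that strong connectivity of $\mc{G}_U$, together with the self‑damped structure of $A$ and the standing convention that each agent keeps its own data ($i\in\mc{N}_i$, equivalently $U_{ii}\neq 0$), makes both of them hold. Only the ``if'' direction is needed, since the statement claims sufficiency.

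First I would fix notation for the composite digraph $\mc{H}$ of $(U\otimes A,D_C)$: its state nodes are the pairs $(i,\ell)$ with $i\in\{1,\dots,N\}$, $\ell\in\{1,\dots,n\}$; by the Kronecker structure there is an arc $(j,\ell)\to(i,m)$ exactly when $U_{ij}\neq 0$ and $A_{m\ell}\neq 0$; and, under Assumption~(v), the $i$‑th diagonal block of $D_C$ has nonzero diagonal entries precisely at the states $s_q$ measured by the agents $q\in\mc{N}_i$, so the output attached to block $i$ reads the nodes $(i,s_q)$, $q\in\mc{N}_i$. The rank condition is then immediate and mirrors the sufficiency part of Theorem~\ref{thm_Sp}: since $A$ is self‑damped we have $A_{\ell\ell}\neq 0$ for all $\ell$, and since $U_{ii}\neq 0$ the self‑loop $(i,\ell)\to(i,\ell)$ is an arc of $\mc{H}$ for every $(i,\ell)$; the collection of all $Nn$ such self‑loops is a disjoint family of cycles spanning every state node of $\mc{H}$.

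The substantive step is output connectivity, and the key observation is that for each state $s$ that is measured by some agent, the ``$s$‑layer'' $\{(j,s):j=1,\dots,N\}$ of $\mc{H}$ is an isomorphic copy of the communication digraph $\mc{G}_U$: by self‑damping $A_{ss}\neq 0$, hence $(j,s)\to(i,s)$ is an arc of $\mc{H}$ iff $U_{ij}\neq 0$. Now take an arbitrary state node $(i,\ell)$. Because $(A,C)$ is structurally observable by hypothesis, the output‑connectivity condition holds in the single‑system digraph (this is exactly what Theorem~\ref{thm_Sp} guarantees), so there is a directed path from $x_\ell$ to some measured state $x_{s_q}$ — either $x_\ell$ lies in a parent SCC that already contains a measured node, or it lies in a child SCC that precedes, under $\prec$, such a parent SCC. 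Since $U_{ii}\neq 0$, every arc of the single‑system digraph is present inside block $i$ of $\mc{H}$, so this path lifts to a directed path $(i,\ell)\rightsquigarrow(i,s_q)$ in $\mc{H}$. From $(i,s_q)$ one then moves along the $s_q$‑layer, which is a strongly connected copy of $\mc{G}_U$, to reach $(i',s_q)$ for some agent $i'$ with $q\in\mc{N}_{i'}$ (such an $i'$ exists because $\mc{G}_U$ is SC — take $i'=q$ if $q\in\mc{N}_q$, otherwise an out‑neighbour of $q$), and $(i',s_q)$ is read by the output of block $i'$. Concatenating, $(i,\ell)$ has a directed path to an output; as $(i,\ell)$ was arbitrary, output connectivity holds, and with the rank condition in hand the pair $(U\otimes A,D_C)$ is generically observable, i.e., networked observability in the sense of Definition~\ref{def_netobs} holds.

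The main difficulty I anticipate is bookkeeping rather than a conceptual gap: one has to pin down the direction conventions for $\mc{N}_i$, for the zero–nonzero pattern of $U$, and for the arcs of the system digraph so that the three are mutually consistent, and one has to be explicit that the self‑loops $U_{ii}\neq 0$ are used twice — once to supply the spanning self‑loop family for the rank condition, and once to embed each single‑system path inside one block. If one did not wish to assume $U_{ii}\neq 0$, the argument would still go through but would require routing the within‑block paths through cycles of $\mc{G}_U$ via its strong connectivity, which I would relegate to a remark; I would likewise note, as a remark, that Assumption~(v) is not essential, since the layer‑wise argument applies to each measured state separately when an agent measures more than one state.
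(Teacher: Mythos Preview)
Your proof is correct and follows essentially the same approach as the paper's: both verify the rank condition via the self-loops supplied by $A_{\ell\ell}\neq 0$ and $U_{ii}\neq 0$, and both establish output connectivity by first lifting a single-system path (from Theorem~\ref{thm_Sp}) into one block and then routing through the network via the strong connectivity of $\mc{G}_U$. Your version is more explicit about the Kronecker digraph $\mc{H}$ and the layer structure, and more careful about where $D_C$ actually places the outputs, but the underlying argument is the same.
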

\begin{proof}
	The proof outline is similar to the proof of Theorem~\ref{thm_Sp}. To satisfy the networked observability (according to the definition), the pair $(U \otimes A, D_C)$ must be observable. Following the structural observability in  \cite{lin,rein_book}, for the self-damped underlying system, every agent applies self information for networked estimation (along with the information of neighbors). This implies that the matrix $U \otimes A$ is structurally full-rank and the rank condition for structural observability is satisfied  \cite{acc13,jstsp14,nuno-suff.ness,park2017design,asilomar11}. In the networked system graph associated to $U \otimes A$, the Strong Connectivity of $\mc{G}_U$ implies that access to the measurements/outputs is shared among all agents via a path. In the networked system graph including self-damped sub-systems, according to Theorem~\ref{thm_Sp} every parent SCC, say $\mc{S}^p_j$, is output-connected. Assume agent $i$ takes (at least) one state measurement $\mb{y}_i$ in $\mc{S}^p_j$. The Strong-connectivity of $\mc{G}_U$ implies that there is a directed path from every agent, say $k$, to agent $i$. Therefore, all the states in the sub-system associated to the agent $k$ are connected via this path to the output $\mb{y}_i$ measured by the agent $i$. This holds for all agents measuring a state in parent SCCs, and therefore the output-connectivity of all parent SCCs follows from strong-connectivity of $\mc{G}_U$. The output-connectivity of child SCCs follows from the similar argument as in Theorem~\ref{thm_Sp}, and the output-connectivity condition of the structural observability theorem in  \cite{lin,rein_book} follows. { This implies that SC network among agents is sufficient for networked estimation/observability.}
\end{proof}
{
\begin{lem}\label{lem_nec}
	Assuming each agent takes one measurement, the minimum number of agents to estimate the state of a self-damped dynamical system is equal to $|\mc{S}_p|$, where $|\cdot|$ is the cardinality of the set. Further, for this minimum number of agents, the SC network among agents is necessary for networked observability/estimation\footnote{{Note that here without loss of generality we assume every agent measures the states in one parent SCC. In case the number of agents is less than $|\mc{S}_p|$, some agents measure the states in more parent SCCs. Assuming no two agents take measurement from the same parent SCC we still need an SC network among these agents. In fact, the key point here is that the information of every parent SCC reaches to every other agent via a directed path.}}.
\end{lem}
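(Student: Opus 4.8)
The plan is to prove the two claims separately: the counting claim follows almost immediately from Theorem~\ref{thm_Sp}, while the necessity of strong connectivity will be established by a contradiction argument that is, in a sense, dual to the sufficiency proof of Theorem~\ref{thm_net}.

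First I would establish that the minimum is exactly $|\mc{S}^p|$. By Theorem~\ref{thm_Sp}, a self-damped system is observable if and only if every parent SCC $\mc{S}^p_j$ contains at least one measured state node. Since each agent measures exactly one state and each state node belongs to exactly one SCC, any observability-achieving placement must spread its measured nodes over all $|\mc{S}^p|$ parent SCCs, hence requires at least $|\mc{S}^p|$ agents; conversely, placing one agent on a single state node of each parent SCC meets the condition of Theorem~\ref{thm_Sp}, so $|\mc{S}^p|$ agents suffice. Therefore the minimum number of agents equals $|\mc{S}^p|$. Moreover, in this minimum regime a pigeonhole argument combined with Theorem~\ref{thm_Sp} forces a bijection between agents and parent SCCs: the $|\mc{S}^p|$ measured states must cover the $|\mc{S}^p|$ parent SCCs with at least one each, so each parent SCC is monitored by exactly one agent and no measured state lies in a child SCC. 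Let $a(j)$ denote the unique agent whose measured state lies in $\mc{S}^p_j$.

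Next I would prove necessity of strong connectivity of $\mc{G}_U$ by contradiction. Suppose $\mc{G}_U$ is not SC; then there exist agents $k$ and $m=a(j)$ such that no directed path in $\mc{G}_U$ leads from $k$ to $m$. Consider the copy of the underlying system inside agent $k$'s block of the networked digraph associated with $U\otimes A$, and in particular a state node belonging to $\mc{S}^p_j$ in that copy. Because $\mc{S}^p_j$ is a parent SCC, it has no outgoing edge to any other SCC of the underlying digraph, so within agent $k$'s block this node cannot reach agent $k$'s own measured state nor any node outside $\mc{S}^p_j$; by the structure of $U\otimes A$ and $D_C$ (Definition~\ref{def_netobs}), the only way it can become output-connected is to reach the measured state of some agent monitoring a node of $\mc{S}^p_j$, and by the bijection that agent is uniquely $m$, which requires a communication path $k\to m$ in $\mc{G}_U$ (exactly as in the sufficiency argument of Theorem~\ref{thm_net}). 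No such path exists, so this node is not output-connected, the output-connectivity condition of the structural observability theorem of \cite{lin,rein_book} fails for $(U\otimes A, D_C)$, and networked observability is violated — a contradiction. Hence $\mc{G}_U$ must be SC. The rank condition, as in Theorem~\ref{thm_net}, remains satisfied throughout thanks to the self-loops guaranteed by the self-damped assumption, so only output connectivity is at stake.

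The main obstacle I anticipate is precisely the middle step of the contradiction argument: rigorously arguing that, in the minimum-agent regime, the states of a parent SCC inside an \emph{arbitrary} agent's local copy can be output-connected \emph{only} through a communication path to that SCC's unique monitoring agent. This needs careful use of (a) the defining ``no outgoing inter-SCC edge'' property of parent SCCs, which rules out any intra-block shortcut from these states to another measured state, and (b) the agent-to-parent-SCC bijection established above, which makes the target agent unique and hence pins the required communication path. Once these two facts are made precise, the reduction to the output-connectivity condition of the structural observability theorem, and thus the contradiction, is routine.
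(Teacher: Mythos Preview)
Your proposal is correct and follows essentially the same approach as the paper: the counting claim via Theorem~\ref{thm_Sp} plus a pigeonhole bijection, then necessity of strong connectivity by contradiction, exploiting that in the minimum regime each parent SCC is monitored by a unique agent so a missing communication path breaks output-connectivity for some copy of that SCC. Your version is somewhat more explicit about working inside the networked digraph of $U\otimes A$ and invoking the output-connectivity condition, whereas the paper phrases the same step more informally as ``information of the parent SCC cannot reach'' the other agent, but the underlying argument is identical.
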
	
\begin{proof}
    The proof follows the proof of Theorems~\ref{thm_Sp} and~\ref{thm_net}. Note that measuring a state node in every parent SCC is necessary and sufficient for observability. Assuming every agent takes one state measurement, the minimum number of agents to satisfy observability is equal to the number of parent SCCs, i.e., $|\mc{S}^p|$. Next, assume we have the minimum number of agents each measuring a state in  a parent SCC. In the network estimation scenario, having an SC network each agent's information (regarding the parent SCC measured by that agent) reaches to every other agent via a directed path. This implies that in the networked system every parent SCC is observable to every agent. Let assume that the communication network is not SC. This implies that (at least) there is no directed path from one agent, say $a$, to another agent, say $b$. Therefore, the information of parent SCC $\mc{S}^p_i$ measured by agent $a$ cannot reach to agent $b$. Note that we have the minimum number of agents/measurements and, therefore, no other agent is measuring any state node in $\mc{S}^p_i$. This implies that the states in $\mc{S}^p_i$ are not observable to agent $b$ and the networked observability is violated. Therefore, for minimum number of agents, the networked estimation error cannot attain steady-state stability over a non-SC network.
\end{proof}
}
{
Note that the networked observability results in this section are particularly defined for non-SC system digraphs, i.e., the case system matrix~$A$ is reducible. In case the system digraph is SC, according to Theorem~\ref{thm_Sp}, only one measurement is necessary and sufficient for structural observability. Therefore, only one agent may perform the estimation and the concept of networked observability is irrelevant. This justifies Assumption (ii) in this paper. In case having more than one agent, measuring perhaps different state nodes in the SC system digraph, there is no need for the communication network of agents to be SC and it  might be even disconnected.}

{It should be mentioned that, following the same line of justification as in Lemma~\ref{lem_nec}, the necessary SC network condition can be extended to the case where agents take two or more distinct measurements. If no two agents share a measurement of the same parent SCC, the SC network among agents is necessary for networked observability.  The Strong Connectivity of multi-agent communication network is a typical assumption in  networked estimation literature as in  \cite{jstsp,Sayed-LMS,acc13,jstsp14,nuno-suff.ness,park2017design,asilomar11}, and also in optimal design of sensor networks as in \cite{pequito2017structurally}. In this paper, we consider networked estimation via the minimum number of measurements each from a parent SCC. This is to minimize the measurement costs.  Based on the results of Theorem~\ref{thm_Sp}, Theorem~\ref{thm_net} and Lemma~\ref{lem_nec}, assuming networked estimation with minimum number of measurements defined in Lemma~\ref{lem_nec} each assigned to one agent,} one can relax the networked observability constraint and reformulate the MCNE problem \eqref{eq_formulate_1} for self-damped systems into two subproblems as follows:
\begin{prob}
	Consider the setup in Problem Formulation~\ref{prob_1} for a self-damped system. The problem can be subdivided into Minimum-Cost Sensor Selection (MCSS) problem:
	\begin{equation} \label{eq_formulate_2sensor}
	\begin{aligned}
	\displaystyle
	\min
	~~ & \sum_{i=1}^{N} \sum_{j=1}^{n} \delta_{ij}\mc{C}_{ij} \\
	\text{s.t.} ~~ & (A,C)\mbox{~observability},\\
	~~ &  \mc{C}_{ij} \in \{0,1\}\\
	~~ &  \mc{A}_{ii}=1,~~ \forall i \in \{1, \hdots, n\}  \\
	\end{aligned}
	\end{equation}
	and Minimum-Cost Communication Network (MCCN) problem:
	\begin{equation} \label{eq_formulate_2net}
	\begin{aligned}
	\displaystyle
	\min
	~~ & \sum_{i=1}^{N} \sum_{j=1}^{n} \eta_{ij}\mc{U}_{ij} \\
	\text{s.t.} ~~ & \mc{G}_U\mbox{~is~SC},\\
	~~ &  \mc{U}_{ij} \in \{0,1\}\\
	\end{aligned}
	\end{equation}
\end{prob}
In order to justify the above formulation, note that based on Theorems~\ref{thm_Sp} and \ref{thm_net} and Lemma~\ref{lem_nec}, the networked observability constraint in Problem Formulation~\ref{prob_1} can be decomposed into (i) $(A,C)$-observability constraint for which every parent SCC is measured by (at least) one agent related to the optimization term $\sum_{i=1}^{N} \sum_{j=1}^{n} \delta_{ij}\mc{C}_{ij}$, and (ii) strong-connectivity of multi-agent network related to the optimization term $\sum_{i=1}^{N} \sum_{j=1}^{n} \eta_{ij}\mc{U}_{ij}$. Further we note that,
\begin{eqnarray} \nonumber
\min \left(\sum_{i=1}^{N} \sum_{j=1}^{n} \delta_{ij}\mc{C}_{ij} +\sum_{i=1}^{N} \sum_{j=1}^{n} \eta_{ij}\mc{U}_{ij} \right) \\
= \min \sum_{i=1}^{N} \sum_{j=1}^{n} \delta_{ij}\mc{C}_{ij} + \min \sum_{i=1}^{N} \sum_{j=1}^{n} \eta_{ij}\mc{U}_{ij}
\end{eqnarray}
This is because both summations (including the weights $\delta_{ij}$ and $\eta_{ij}$) are positive, therefore the minimization of the sum is equivalent to the minimization of each term. {The $(A,C)$-observability constraint in  \eqref{eq_formulate_2sensor}, which according to Theorem~\ref{thm_Sp} implies that one state node in every parent SCC must be measured,  is related to MCSS problem. On the other hand, the constraint $\mc{G}_U$ being SC in \eqref{eq_formulate_2net}, according to Theorem~\ref{thm_net} and Lemma~\ref{lem_nec}, is related to MCCN problem.} Notice that the constraint $\mc{A}_{ii}=1,~~ \forall i \in \{1, \hdots, n\}$ formulates the self-damped system constraint.  From these arguments, the MCNE problem is decomposed into MCSS and MCCN optimization problems separately discussed in the next sections. {Based on the structured systems theory, with the given assumptions, the optimality and complexity of the MCNE problem is \textit{almost always} the same as the MCSS and MCCN problems. We should emphasize that this decomposition is only valid for self-damped systems, and for general systems such decomposition might be irrelevant. }

\section{MCSS Problem: Algorithm and Complexity} \label{sec_mcss}
Recall that the MCSS problem is the problem of identifying the states to be measured  such that  a certain cost of measurements  is minimized while satisfying  observability condition for inference purposes.

\begin{rem} \label{mcss}
	For general systems the MCSS problem is NP-hard.
\end{rem}
Note that for general systems (not necessarily self-damped), the MCSS problem is proved to be reducible to \textit{minimum set covering} problem and therefore is NP-hard  \cite{pequito_gsip}. In this section, we find a polynomial-order solution under self-damped system constraint. First, we add two new constraints on the state-measurement pairs. For minimization purposes, we assume that each agent is assigned to measure one and only one state, implying that each~$C_i$ is a row vector and $\sum_{j=1}^{n} \mc{C}_{ij} = 1$. Also, each state is at most measured by one agent, implying that $\sum_{i=1}^{N} \mc{C}_{ij} \leq 1$. Adding these conditions the new MCSS formulation is as follows:
\begin{prob}
	Considering that every agent measures only one state, the MCSS problem has the following form:
	\begin{equation} \label{eq_MCSS}
	\begin{aligned}
	\displaystyle
	\min
	~~ & \sum_{i=1}^{N} \sum_{j=1}^{n} \delta_{ij}\mc{C}_{ij} \\
	\text{s.t.} ~~ & (A,C)\mbox{~observability},\\
	~~ &  \mc{C}_{ij} \in \{0,1\}\\
	~~ &  \sum_{i=1}^{N} \mc{C}_{ij} \leq 1\\
	~~ &  \sum_{j=1}^{n} \mc{C}_{ij} = 1\\
	~~ &  \mc{A}_{ii}=1,~~ \forall i \in \{1, \hdots, n\}  \\
	\end{aligned}
	\end{equation}
\end{prob}
Using the results of Theorem~\ref{thm_Sp}, the $(A,C)$-observability constraint can be relaxed as having one state measurement from each parent SCC $\mc{S}^p_j$ to be assigned with an agent~$i$. The agent measures the state in parent SCC that has minimum measurement cost.
In this direction, redefine the state measurement cost matrix $\delta$ by a new parent SCC measurement cost matrix $\Delta$ as follows:
\begin{eqnarray} \label{eq_Delta}
\Delta_{ij} = \min(\delta_{im}), ~ x_m \in \mc{S}^p_j
\end{eqnarray}
Note that the new cost matrix $\Delta$ is $N \times N$. Therefore, instead of $\mc{C}$, a new assignment matrix needs to be defined relating the parent SCCs to agents.  Denote this $0-1$ matrix by $\mc{Z}=\{\mc{Z}_{ij}\}$. An element $\mc{Z}_{ij} = 1$ implies that agent $i$ is assigned to measure the minimum-cost state in parent SCC $j$ and $\Delta_{ij}$ denotes this cost. Recalling that measuring all $N$ parent SCCs guarantees the $(A,C)$-observability (Theorem~\ref{thm_Sp}) and the fact that parent SCCs do not share state nodes  \cite{cormen2009introduction}, the new optimization formulation is as follows:
\begin{prob} \label{prob_Z}
	Redefining the agent-SCC cost matrix $\Delta$ and introducing the agent-SCC assignment matrix $\mc{Z}$, the MCSS problem has the following form:
	\begin{equation} \label{eq_LSAP}
	\begin{aligned}
	\displaystyle
	\min
	~~ & \sum_{i=1}^{N} \sum_{j=1}^{n} \Delta_{ij} \mc{Z}_{ij} \\
	\text{s.t.} ~~ &  \mc{Z}_{ij} \in \{0,1\}\\
	~~ &  \sum_{i=1}^{N} \mc{Z}_{ij} = 1\\
	~~ &  \sum_{j=1}^{n} \mc{Z}_{ij} = 1\\	
	\end{aligned}
	\end{equation}
\end{prob}
Note that $\sum_{i=1}^{N} \mc{Z}_{ij} = 1$ implies that each parent SCC is measured by one agent, and $\sum_{j=1}^{N} \mc{Z}_{ij} = 1$ implies that each agent makes one measurement of a parent SCC.
The above formulation is a \textit{linear assignment problem}, which is well-known in combinatorial optimization. This problem is discussed in the literature  to a great extent. For extensive surveys on this problem and generalizations see  \cite{cattrysse1992survey,burkard1999linear}. The most well-known polynomial-order solution for linear assignment problem is the \textit{Hungarian method}  \cite{kuhn1955hungarian}. {The pseudo-code for the Hungarian method is given in Algorithm~\ref{alg_hung}.} The computational complexity of this algorithm is $\mc{O}(N^3)$. Recalling that the formulation~\eqref{eq_LSAP} is equivalent to the formulation~\eqref{eq_formulate_2sensor} leads to the following remark,
\begin{rem} \label{mcss2}
	The computational complexity of MCSS problem solution for self-damped system is $\mc{O}(N^3)$, where $N$ is the number of agents (or parent SCCs).
\end{rem}

\begin{algorithm}[!t] \label{alg_hung}
	{
	\textbf{Given:} Cost matrix  $\Delta=[\Delta_{ij}]$ \;
	\For{$i=1,\hdots,N$}{
		$u_i =$ smallest entry in row $i$ of $\Delta$\;
		\For{$j=1,\hdots,N$}{$\hat{\Delta}_{ij}=\Delta_{ij}-u_i$\	
		}
	}
	\For{$j=1,\hdots,N$}{
		$v_j =$ smallest entry in column $j$ of $\hat{\Delta}$\;
		\For{$i=1,\hdots,N$}{$\hat{\Delta}_{ij}=\hat{\Delta}_{ij}-v_j$\	
		}
	}
	$S =$ an independent set of zeros of max size in  $\hat{\Delta}$\; 	
	$q = |S|$ \;
	\While{$q<N$}{
		Cover $\hat{\Delta}$\;
		$k = $ smallest entry in $\hat{\Delta}$ not covered by a line\;
		\For{$i=1,\hdots,N$}{
			\For{$j=1,\hdots,N$}{\If{$\hat{\Delta}_{ij}$ is not covered}{$\hat{\Delta}_{ij} = \hat{\Delta}_{ij} - k$ }
				\If{$\hat{\Delta}_{ij}$ is covered twice}{$\hat{\Delta}_{ij} = \hat{\Delta}_{ij} + k$ }	
			}
		}	
		$S =$ an independent set of zeros of max size in  $\hat{\Delta}$\;
		$q = |S|$ \;}
	\For{$i=1,\hdots,N$}{
		\For{$j=1,\hdots,N$}{
			\If{$\hat{\Delta}_{ij} \in S$ }{$\mathcal{Z}_{ij} = 1$}
			\Else{$\mathcal{Z}_{ij} = 0$}
		}
	}
	\textbf{Return} $\mathcal{Z}=[\mathcal{Z}_{ij}]$\;\
}	
	\caption{Pseudo-code for the Hungarian Algorithm}
\end{algorithm}

\section{MCCN Problem: Algorithm and Complexity} \label{sec_mccn}
Recall that MCCN problem is to find the minimum weight (cost) strongly-connected subgraph spanning all nodes (agents) in the communication network.
\begin{rem} \label{mccn}
	For general (directed) communication networks the MCCN problem is NP-hard.
\end{rem}
This is because the MCCN problem is reducible to  \textit{directed Hamiltonian cycle} problem and therefore is NP-hard   \cite{eswaran1976augmentation,chen1989strongly}. This problem is also known as \textit{minimum spanning strong sub(di)graph} in literature  \cite{bang2008minimum}. For approximation algorithms to this NP-hard problem,  \cite{khuller1995approximating,khuller1996strongly} provide a $1.62$-approximation algorithm, and  \cite{vetta2001approximating} proposes a $1.5$-approximation algorithm. We consider an undirected communication network among agents, i.e., the communication links are all bidirectional. This simply implies that if two agents are in the communication range of each other, e.g., in a wireless sensor network, both agents share their information. This is a typical assumption in the literature of networked estimation as in  \cite{nuno-suff.ness,Sayed-LMS}. This assumption changes the problem as in the following:
\begin{prob} Considering bidirectional communication among agents, the MCCN problem has the following form:
	\begin{equation} \label{eq_MCCN_bi}
	\begin{aligned}
	\displaystyle
	\min
	~~ & \sum_{i=1}^{N} \sum_{j=1}^{n} \eta_{ij}\mc{U}_{ij} \\
	\text{s.t.} ~~ & \mc{G}_U\mbox{~is~SC},\\
	~~ &  \mc{U}_{ij} \in \{0,1\}\\
	~~ &  \mc{U} \mbox{~is~symmetric}\\
	\end{aligned}
	\end{equation}
	\label{prob_mccn}
\end{prob}
The above problem can be reformulated as a well-known problem in combinatorial optimization and discrete mathematics, known as \textit{minimum weight spanning tree}. Two classic polynomial-order solutions (with complexity $\mc{O}(N^2\log(N))$) for this problem are \textit{Prim} algorithm  \cite{prim1957shortest} and \textit{Kruskal} algorithm  \cite{kruskal1956shortest}. However, a more efficient distributed algorithm with computational complexity $\mc{O}(N\log(N))$ is proposed in  \cite{gallager1983distributed}. {The pseudo-code for the distributed algorithm is given in \cite{gallager1983distributed} and excluded here due to space limitation.} Recalling that the formulation~\eqref{eq_MCCN_bi} is the equivalent form of the formulation~\eqref{eq_formulate_2net}, we deduce the following.
\begin{rem} \label{mccn2}
The computational complexity of the most efficient solution to MCCN problem for undirected networks (with bidirectional links) is $\mc{O}(N\log(N))$.
\end{rem}

{
\subsection{Remarks on SC Communication Network Condition} \label{sec_rem}
Note that with the help of Assumptions (i)-(v), we show in Lemma~\ref{lem_nec} that SC communication network among agents is necessary for networked estimation. However, in general, for networked estimation there might be cases for which some of the given assumptions are violated and therefore the agents' network is not necessarily SC. Assume that we are interested to reduce the number of communications among agents by, for example, increasing the number of system measurements taken by agents.
In this direction, consider three cases:
\begin{itemize}
\item
Case (I):
Following Assumption (v) let each agent take one measurement. Consider the number of agents to be $N_1$ and the number of parent SCCs to be $N_2<N_1$. In the communication network, every agent needs to receive the information of the other $N_2 -1$ parent SCCs via directed paths. In such case, although the network is not necessarily SC, the amount of communications is more than the case where $N_2$ agents each measure one parent SCC and share information over (smaller) SC network.
\item
Case (II): consider $N_2$ measurements each from one parent SCC are assigned to $N_1<N_2$ agents, implying that some agents take more than one measurement. Since no two agents share a measurement and following the same reasoning as in Lemma~\ref{lem_nec}, the SC communication network is a necessary condition and MCCN problem formulation~\eqref{eq_MCCN_bi} follows.
\item
Case (III): consider $N_1$ measurements more than necessary $N_2<N_1$ parent SCC observations. Let us assign these measurements to $N_3<N_1$ agents, where some agents may share measurements from one or more parent SCCs. In this case, the minimum communication network is not necessarily SC and could be disconnected. Therefore, recalling the bidirectional link assumption among agents, the SC condition on $\mc{G}_U$ is relaxed to having a disconnected group of smaller SC sub-networks. Recall that the solution to the MCCN problem subject to SC undirected network condition is shown to be minimum weight spanning tree. According to the definition, removing any link from a spanning tree yields a disconnected network of smaller trees knwon as a \textit{forest} \cite{algorithm}. Therefore, one may run similar algorithm over  SC sub-networks and find the \textit{minimum weight spanning forest} as the solution. See more information in \cite{matsui1995minimum}.
\end{itemize}
To summarize, the communication cost in the Case~(I) is not less than the MCCN problem formulation~\eqref{eq_MCCN_bi} while the measurement cost is more than the MCSS problem formulation~\eqref{eq_MCSS}. Case~(II) can be considered as an extension to the  MCCN problem formulation~\eqref{eq_MCCN_bi} and the MCSS problem formulation~\eqref{eq_MCSS} where agents take more than one measurement to reduce the amount of communications. In Case~(III) the SC network constraint in the  MCCN problem formulation~\eqref{eq_MCCN_bi} is relaxed and the minimum weight spanning forest is given as a solution, while the MCSS problem formulation~\eqref{eq_MCSS} is NP-hard in this case \cite{pequito_gsip} as some agents may share measurements from one or more parent SCCs.
We again mention that in this paper we consider minimum cost networked estimation accompanied with minimum number of measurements distinctly assigned to the agents.}

\section{Illustrative Example} \label{sec_example}
In this section, we provide an example to explain the methodology for Minimum-Cost Sensor Selection and Minimum-Cost Communication Network design. Consider an example system digraph with $18$ state nodes shown in Fig. \ref{fig_graph}.
\begin{figure}[!t]
	\centering
	\includegraphics[width=2.4in]{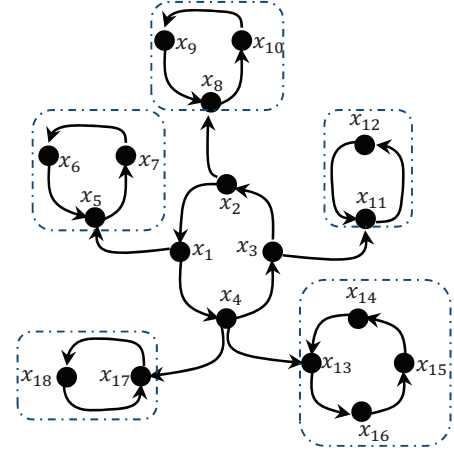}
	\caption{This figure shows a system digraph, where each node represents a state of the dynamical system. For simplicity the self-cycle at each node is not represented in the figure. The system contains $5$ parent SCCs (shown by dashed squares).}
	\label{fig_graph}
\end{figure}
This graph represents the structure of a system in the form \eqref{eq_sys1}. Note that we assume every state node in the graph contains a self-cycle, which is not shown for simplicity of the figure. Having  a self-cycle on every node the system is self-damped. Using the depth-first-search algorithm, it can be verified that the graph contains $6$ SCCs among which $5$ have no outgoing links and therefore are parent SCCs, marked by blue squares. Based on the Theorem~\ref{thm_Sp}, the minimum number of agents to estimate this system is $5$. Measurement of every state by each agent/sensor has a cost representing the matrix $\delta$. For this example, this agent-state measurement cost $\delta$ is randomly generated in the range $[0,10]$. To assign the states to be measured by agents, using~\eqref{eq_Delta}, the minimum cost state measurement in each parent SCC is considered to obtain cost matrix $\Delta$.  This agent-SCC measurement cost matrix is as follows:
\begin{eqnarray} \nonumber
\Delta = \left(
\begin{array}{ccccc}
8.1472  &  0.9754  &  1.5761  &  1.4189  &  6.5574 \\
9.0579 &   2.7850  &  9.7059 &   4.2176   & 0.3571 \\
1.2699  &  5.4688 &   9.5717  &  9.1574 &   8.4913 \\
9.1338  &  9.5751  &  4.8538  &  7.9221 &   9.3399 \\
6.3236  &  9.6489   & 8.0028  &  9.5949 &   6.7874
\end{array}
\right)
\end{eqnarray}
In order to solve the MCSS problem, based on the Formulation~\ref{prob_Z}, using Hungarian method the minimum-cost states in parent SCCs are assigned to the agents. This is done using MATLAB function \texttt{assignDetectionsToTracks}. The algorithm used by MATLAB is of complexity $\mc{O}(N^3)$, where~$N$ is the number of agents. The non-zero entries of optimal measurement structured matrix $\mc{C}$ are as follows: $\mc{C}(1,10)=1$, $\mc{C}(2,17)=1$, $\mc{C}(3,6)=1$, $\mc{C}(4,11)=1$, $\mc{C}(5,16)=1$.

For networked estimation/observability the communication network among these agents needs to be SC, as stated in Theorem~\ref{thm_net}.
In the communication network of agents the links are assumed to be bidirectional, and the symmetric communication cost matrix $\eta$ is considered randomly as follows:
\begin{eqnarray} \nonumber
\eta = \left(
\begin{array}{ccccc}
* &   7.2459 &   6.0784   & 5.4711 &   3.3048 \\
7.2459  &  *  &  4.8588  &  2.1386 &   2.7136 \\
6.0784  &  4.8588 &   *  &  8.5787  &  3.4038 \\
5.4711  &  2.1386  &  8.5787 &   *   & 4.4812 \\
3.3048   & 2.7136  &  3.4038 &   4.4812  &  *
\end{array}
\right)
\end{eqnarray}
To solve the MCCN problem (Problem Formulation~\ref{prob_mccn}) we use MATLAB function \texttt{graphminspantree}. The algorithm is of complexity $\mc{O}(N^2\log(N))$. The algorithm returns the non-zero entries of optimal communication network matrix $\mc{U}$ for this problem as follows: $\mc{U}(5,1)=\mc{U}(1,5)=1$, $\mc{U}(5,2)=\mc{U}(2,5)=1$, $\mc{U}(5,3)=\mc{U}(3,5)=1$, $\mc{U}(4,2)=\mc{U}(2,4)=1$.
This communication network of agents is shown in Fig.~\ref{fig_network}.
\begin{figure}[!t]
	\centering
	\includegraphics[width=1.5in]{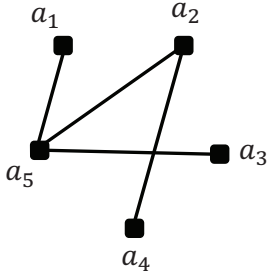}
	\caption{This figure shows the optimal communication network among $5$ agents measuring the parent SCCs in system digraph in Fig.~\ref{fig_graph}. The links are bidirectional and the graph represent the minimum cost spanning tree.}
	\label{fig_network}
\end{figure}
Note that sharing state-predictions over this communication network $\mc{U}$ among agents measuring states according to $\mc{C}$ results in an observable networked estimation of the self-damped system,  while the networked observability cost is minimized. All the algorithms used to optimally design the $0-1$ matrices $\mc{C}$ and $\mc{U}$ are structural and of polynomial time complexity.
\section{Conclusions} \label{sec_con}
It should be noted, following Remarks~\ref{mcss} and \ref{mccn}, the MCSS problem and the MCCN problem being generally NP-hard implies that the main MCNE problem in equation~\eqref{eq_formulate_1} is NP-hard as stated in Remark~\ref{mcne}.
However, based on Remarks~\ref{scc}, \ref{mcss2}, and \ref{mccn2} the MCNE problem for self-damped system constraint under bidirectional communication links is of computational complexity $\mc{O}(n^2+N^3)$ with $n$ as the number of state nodes (system size) and $N$ as the number of parent SCCs or agents (communication network size). If the number of agents is less than number of system states ($N<n^{\frac{2}{3}}$) the computational complexity of this problem is $\mc{O}(n^2)$.


{
Although in this paper we assume that the minimum number of measurements are each assigned to one agent, the solution can be extended to the case that every agent takes two (or more) distinct measurements. In such case, the communication network is smaller and the communication costs are less. We should emphasize that agents should take measurements from distinct parent SCCs, otherwise, in case agents share measurements of parent SCCs, the MCSS problem is NP-hard to solve \cite{pequito_gsip}. }
	
It should be noted that for general systems, i.e.,  systems that are not necessarily self-damped,  other than parent SCCs, contractions are the key components to ensure observability  \cite{doostmohammadian2017recovery,doostmohammadian2017observational}. Unlike parent SCCs, the contractions share nodes and therefore for such systems it is not possible to reformulate the MCSS problem as a linear assignment. One solution is to apply \textit{greedy algorithms}, which is the direction of our future research. Further, the communication network condition for networked observability also requires more than strong connectivity. For such systems, the network of agents  requires certain hubs measuring nodes in contractions along with SC network of agents measuring parent SCCs  \cite{globalsip14}. Therefore the MCCN problem is more complicated as it is our ongoing research.

\bibliographystyle{IEEEbib}
\bibliography{bibliography}

\end{document}